\newcounter{mytempeqncnt}
\newtheorem{pro}{Proposition}
\long\def\symbolfootnote[#1]#2{\begingroup
\def\thefootnote{\fnsymbol{footnote}}
\footnote[#1]{#2}\endgroup}
\begin{document}
\title{Optimal Time Reuse in Cooperative D2D Relaying Networks}
\name{Zhaowei Zhu$^1$, Shengda Jin$^1$, Yu Zeng$^2$, Honglin Hu$^2$, and Xiliang Luo$^1$}
\address{$^1$ShanghaiTech University, Shanghai, China,
$^2$Shanghai Advanced Research Institute, CAS, Shanghai, China\\
Email:{ {luoxl}@shanghaitech.edu.cn}}

\maketitle

\ninept

\begin{abstract}
Device-to-device (D2D) communication has become one important part of the
5G cellular networks particularly due to the booming of proximity-based applications,
e.g. D2D relays. However, the D2D relays may create strong interference to nearby users.
Thus interference management in a cellular network with D2D relays is critical.
In this paper, we study the optimal time reuse patterns in a cellular network with
cooperative D2D relays and derive the corresponding optimized relaying strategies.
Due to the binary association constraints and the fact that the total number of feasible
time reuse patterns increase exponentially with the number of servers,
we have a large scale integer programming problem which is formidable to solve.
However, we show that we just need to activate a few time reuse patterns in the order
of the number of users. Accordingly, a low-complexity algorithm is proposed to find
the set of active time reuse patterns and solve this problem in an approximated
way. Numerical simulations demonstrate that our scheme can efficiently allocate the time
resources and determine the relaying strategies. Furthermore, the proposed scheme
offers significant gains relative to the existing ones.
\end{abstract}

\begin{keywords}
Cooperative network, device-to-device relay, time reuse, resource allocation, user association.
\end{keywords}

\section{Introduction}

With the ever-increasing mobile data traffic demands, the proximity-based communication
through the device-to-device (D2D) links is regarded as one promising technology in 5G
cellular networks \cite{2014_survey}. Different from the base station (BS)-based links,
D2D links usually appear in short-range scenarios and can significantly enhance
the network throughput performance without interfering distant users
\cite{2016_Xu_D2D}. In a cooperative D2D network, when both types of links share the same
time or spectrum resources, severe interference can still occur among nearby users
without careful resource management \cite{2017_Liu_interference}.

To make full use of the limited resources, a lot of works have been done to optimize
D2D-enabled cellular networks
\cite{2016_Ma_HenNet_D2D,2013_Xu_Cooperation,2013_Zhou_D2D_relay,
2016_Mastronarde_D2D_relay,2015_Li_MMM}.
In particular, the spectrum allocation and power control were jointly optimized in
\cite{2016_Ma_HenNet_D2D} to maximize different network utility metrics under QoS
constraints, while D2D relays were not modeled in \cite{2016_Ma_HenNet_D2D}. Recent works
in \cite{2013_Xu_Cooperation,2013_Zhou_D2D_relay,2016_Mastronarde_D2D_relay,2015_Li_MMM}
emphasized the importance of D2D relays and recommended some D2D user equipments (DUEs)
to serve as
D2D relays for other DUEs experiencing poor BS-based links. Orthogonal resource allocations
and corresponding relaying strategies were optimized jointly under different constraints in
\cite{2013_Xu_Cooperation,2013_Zhou_D2D_relay,2016_Mastronarde_D2D_relay,2015_Li_MMM}.
Specifically, the orthogonal time/frequency resources were allocated between the BS and the
D2D relays in \cite{2013_Zhou_D2D_relay}, among the neighboring nodes in \cite{2015_Li_MMM},
and among the D2D relays in \cite{2013_Xu_Cooperation,2016_Mastronarde_D2D_relay}.
However, we know non-orthogonal resource reuse among the BSs and the D2D relays will
further enhance the network throughput performance.
Each reuse strategy was modeled as one reuse pattern and the frequency resource allocations
were optimized over all possible reuse patterns for heterogeneous networks (HetNets) in
\cite{2015_Zhuang,2016_Zhuang,2016_Kuang}. In particular, the network-wide average packet
delay was minimized in \cite{2015_Zhuang}, the energy efficiency was maximized in
\cite{2016_Zhuang}, and the network proportional fairness (PF) metric was maximized in
\cite{2016_Kuang}.

In this paper, we endeavor to find the optimal resource reuse scheme in a cooperative
network with D2D relays. The major contributions can be summarized as follows.
 Firstly, we study the optimal set of time reuse patterns
and the corresponding relaying strategies such that the PF metric of the network is
maximized. Even though the total number of feasible reuse patterns increases exponentially
with the number of users, we show that we just need to activate a few time reuse
patterns in the order of the number of users to achieve the optimal network performance.
Secondly, we put forward an efficient algorithm of low complexity to identify the set of
reuse patterns to be activated and solve the problem approximately. Numerical results
demonstrate our scheme can offer significant gains relative to the existing solutions.

The remainder of this paper is organized as follows. Section \ref{System_model} describes the
system model. Section \ref{Formulation} formulates the problem and characterizes
the optimal time reuse profile. Section \ref{Algorithms} proposes one low-complexity
algorithm. Numerical results are given in Section \ref{Simulation} and Section
\ref{Conclusion} concludes the paper.

\vspace{3pt}
\noindent{\it Notations}:
Notations $\| \bm{x} \|_0$, $\bm A^T$, ${\sf conv}(\mathcal{A})$, $|\mathcal{A}|$,
and $\bm a \oplus \bm b$ stand for $l_0$-norm of vector $\bm{x}$, the transpose of
matrix $\bm A$, the convex hull of the set $\mathcal{A}$, the cardinality of the set
$\mathcal{A}$, and the element-wise XOR operation of two vectors.
$\mathds{1}\{\cdot\}$ stands for the indicator function
and takes the value of $1$ ($0$) when the specified condition is met (otherwise).
$\bm e_n$ denotes the unit vector with only the $n$-th element being $1$ and all other
elements being $0$.

\section{System Model} \label{System_model}

\begin{figure}[t]
\centering
\includegraphics[width = 0.48\textwidth]{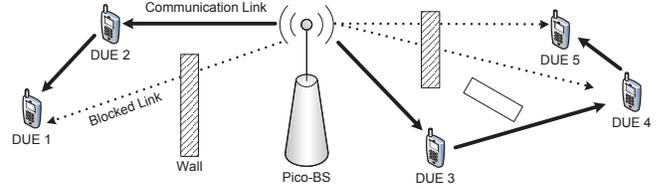}
\vspace{+2pt}
\caption{A cooperative network with D2D relays.
DUE $1$ is served by DUE $2$ since its BS-based link is blocked. Both DUE $4$ and DUE $5$ are served by DUE $3$ for similar reasons.}
\label{Fig:D2D_topology}
\vspace{-10pt}
\end{figure}

As shown in Fig.~\ref{Fig:D2D_topology}, a downlink shared cooperative network \cite{2015_Li_MMM}
is considered here, where a number of nodes coexist including the DUEs and the BSs.
The sets of BSs and DUEs are denoted by $\mathcal{B}$ and $\mathcal{U}$, respectively.
Meanwhile, we define $B:=|\cal{B}|$ and $U:=|\cal{U}|$.
Each DUE is assumed to be able to act as either a relay or a user.
For instance, it can act as a
user to receive data either from the serving BS or from another DUE (behave as one D2D
relay) in one time slot. In another slot, it can act as a relay to help other DUEs.
Thus we assume each DUE can choose to be either a user or a relay in different time slots.
The set of servers $\mathcal{N}$ including both all BSs and D2D relays is defined as
follows.
\begin{equation}
\setlength{\abovedisplayskip}{3pt}
\setlength{\belowdisplayskip}{1pt}
  \mathcal{N} := \{ \underbrace{1,\cdots,B}_{\sf BSs}, \underbrace{{B + 1}, \cdots , {B + U}}_{\sf potential~DUEs} \}.
  \label{Eq:transmitter}
\end{equation}
Let $N :=|\mathcal{N}|$ denote the total number of servers in the set $\mathcal{N}$.
We use a vector $\bm{v}_i = [v_{i,1},\cdots,v_{i,N}]^T$ to indicate the active status
of the servers in one time slot and call it the $i$-th time reuse pattern.
The $n$-th element of $\bm{v}_i$ is $1$, i.e. $v_{i,n} = 1$, when server $n$ is active
under time reuse pattern $\bm v_i$. Otherwise, we have $v_{i,n} = 0$. Note when a server
is active, we assume it utilizes the whole downlink bandwidth for information transmission.
There are totally $2^N-1$ feasible time reuse patterns to be considered and we do not
consider the all $0$ pattern, which means nobody transmits.

Let $\mathcal I = \{1,\cdots,2^N-1\}$ denote the set of indices of all the feasible time
reuse patterns. Furthermore, the set of all time reuse pattern is denoted as
$\tilde{\mathcal{V}} = \{ \bm{v}_i | i \in \mathcal I \}$.
We also use $\mathcal{A}_i$ to denote the set of active servers under pattern $i$,
i.e. $\mathcal{A}_i:= \{ n | v_{i,n} = 1, n \in \mathcal N \}$.
Let $x_i$ denote the fraction of the total time duration of length $T$ allocated to
pattern $i$. The overall time reuse profile is thus given by the vector:
$\bm{x} := [x_1,\cdots,x_{2^N-1}]^T$.
In order to get the optimal network performance, we need to utilize the whole time
duration, i.e. $\sum_{i\in \mathcal I} x_i = 1$.
Furthermore, we assume time-division multiple-access (TDMA) and each server only transmits
data to one served user at one time. A server needs to divide the allocated time resource
under pattern $i$ orthogonally among its served users.
We use $y_{u,n,i} \geq 0$ to denote the fraction of the total duration $T$ allocated to
user $u$ by server $n$ under reuse pattern $i$. Clearly, it is bounded as
$\sum_{u\in \mathcal U} y_{u,n,i}\leq x_i$.
Additionally, as in the current cellular networks, each user is allowed to be associated
to a single server under each time reuse pattern, which is characterized by $z_{u,n,i}$,
i.e. $z_{u,n,i}=1$ indicates user $u$ is served by server $n$ under pattern $i$ and
$z_{u,n,i}=0$ otherwise. Due to this single-server association constraint, the variables
$\{ z_{u,n,i} \}$ satisfy the constraint $\sum_{n \in \mathcal{N}} z_{u,n,i} = 1$.
Note the corresponding relaying strategies are also implied by $\{z_{u,n,i}\}$.

Denote the transmission power of server $n$ by $P_n$.
The signal-to-interference-plus-noise ratio (SINR) of the link between user $u$ and
server $n$ under pattern $i$ is denoted by $\gamma_{u,n,i}$ and can be derived as
\begin{equation}
\setlength{\abovedisplayskip}{3pt}
\setlength{\belowdisplayskip}{3pt}
\label{eq:gamma}
  \gamma_{u,n,i} =
  \frac{\mathds{1}\{ n\in \mathcal{A}_i \} P_n |g_{n,u}|^2}
  {\sigma^2 + \sum_{m \in \mathcal{A}_i, m\neq n } P_{m} |g_{m,u}|^2 },
\end{equation}
where $\sigma^2$ is the power of the thermal noise at the user and $g_{n,u}$
denotes the average gain of the channel from server $n$ to user $u$.
Let $c_{u,n,i}$ denote the spectral efficiency of this link.
To reflect the constraint that one DUE cannot transmit and receive simultaneously,
the spectral efficiency $c_{u,n,i}$ is set to be zero if server $(u+B)$ is active
under reuse pattern $i$. Therefore, the spectral efficiency can be expressed as
\begin{equation}
\setlength{\abovedisplayskip}{0pt}
\setlength{\belowdisplayskip}{3pt}
c_{u,n,i} = \mathds{1}\{ (u+B) \not\in \mathcal{A}_i \}  \cdot \log_2 (1 + \gamma_{u,n,i}).
\end{equation}

Note that the actual time resource allocated to user $u$ by server $n$ under pattern $i$
is $Ty_{u,n,i}$. The average data rate of user $u$ over all reuse patterns can be written
as follows.
\begin{equation}
\setlength{\abovedisplayskip}{4pt}
\setlength{\belowdisplayskip}{3pt}
\begin{split}
\bar R_u & = \frac{1}{T} \cdot W \cdot \sum_{i\in\mathcal I} \sum_{n\in\mathcal N} z_{u,n,i} \cdot T \cdot  y_{u,n,i} \cdot c_{u,n,i}\\
	& = W \cdot \sum_{i\in\mathcal I} \sum_{n\in\mathcal N} z_{u,n,i}  \cdot  y_{u,n,i} \cdot c_{u,n,i},
\end{split}
\end{equation}
where $W$ denotes the system bandwidth.
Besides, we assume that the DUEs in the network demand different data.
For a D2D relay, a portion of its received data are intended for others.
The effective data rate of user $u$ is thus given by
$R_u = \bar R_u - \tilde R_{(u+B)}$,
where $\tilde R_{(u+B)}$ denotes the data rate of server $(u+B)$, i.e.
user $u$ acting as a D2D relay, targeted to the served DUEs.
The total rate of server $n$ serving other DUEs over all reuse patterns is
given by
\begin{equation}
\setlength{\abovedisplayskip}{4pt}
\setlength{\belowdisplayskip}{3pt}
\tilde R_n = W \sum_{i\in \mathcal{I}} \sum_{u\in \mathcal{U}} z_{u,n,i} \cdot y_{u,n,i} \cdot c_{u,n,i}.
\end{equation}

The single-server association rule results in a combinatorial problem and finding an
optimal solution need exhaustive search.
In the next section, we will relax the single-server constraint and allow one user to be
served by multiple servers. This will enable low-complexity algorithms to find the optimal
time reuse profile.

\section{Number of Active Time Reuse Patterns} \label{Formulation}

In this section, we would like to maximize the PF metric of the network through
optimizing the time reuse profile. To this end, we formulate the optimization
problem (\ref{Eq:original_problem}), where
the optimization variables $\bm x$, $\bm y$, $\bm z$ are the overall time reuse
profile: $\{x_i, \forall i\}$, the overall user time allocation profile: $\{y_{u,n,i},
\forall u, n, i\}$, and the DUE association profile: $\{z_{u,n,i},\forall u, n, i\}$
respectively. Problem (\ref{Eq:original_problem}) is a large knapsack problem,
which is NP-hard to find its optimal solution.

\begin{figure}
\vspace{-5pt}
\setcounter{mytempeqncnt}{\value{equation}}
\begin{subequations}
\label{Eq:original_problem}
\begin{align}
    \underset{\bm{x},\bm{y},\bm{z}}{\rm maximize}\quad  & \sum_{u \in \mathcal U} \log (R_u)\label{Eq:PF_metric} \\
    {\rm subject\ to} \quad
    & \bar R_u =  W \sum_{i\in\mathcal I} \sum_{n\in\mathcal N} z_{u,n,i} \cdot y_{u,n,i} \cdot c_{u,n,i}, \forall  u\\
    &\tilde R_n =  W \sum_{i\in\mathcal I} \sum_{u\in\mathcal U} z_{u,n,i} \cdot y_{u,n,i} \cdot c_{u,n,i},  \forall  n \\
    & R_u = \bar R_u - \tilde R_{u+B}  , \  R_u > 0, \forall  u \label{Eq:Rate}\\
    &\sum_{n\in \mathcal N} z_{u,n,i} = 1,z_{u,n,i} \in \{ 0, 1\}, \forall u,n,i\\
    & x_i \geq \sum_{u\in \mathcal U} z_{u,n,i} \cdot y_{u,n,i}, \forall n,i \label{Eq:available_pattern}\\
    & \sum_{i\in\mathcal I} x_i = 1,  x_i \geq 0, \forall i, \ y_{u,n,i} \geq 0,  \forall u,n,i\label{Eq:x_y_constraints}
\end{align}
\end{subequations}
\hrulefill
\vspace{-15pt}
\end{figure}

By allowing fractional user association, we can drop the single-server association
indicator variable $\bm{z}$ and rely on $\bm y$ to imply the fractional user association
profile. Particularly, given $y_{u,n,i} > 0$, we can infer the $u$-th user is served
by the $n$-th server under pattern $i$.
The relaxed multi-server association problem is shown in (\ref{Eq:relaxed_problem}),
whose solution offers an upper bound of the problem in (\ref{Eq:original_problem}).
Since the active servers in each pattern $i$ will use all the available time resources
to optimize the network performance, the inequalities in (\ref{Eq:available_pattern})
indeed become the equalities in (\ref{Eq:available_pattern_relax}).
\setcounter{mytempeqncnt}{\value{equation}}
\begin{subequations}
\setlength{\abovedisplayskip}{3pt}
\setlength{\belowdisplayskip}{3pt}
\label{Eq:relaxed_problem}
  \begin{align}
    \underset{\bm{x},\bm{y}}{\rm maximize}\quad
    & \sum_{u\in\mathcal U} \log (R_u) \label{Eq:object_1} \\
    {\rm subject\ to} \quad
    & \bar R_u =  W \sum_{i\in\mathcal I} \sum_{n\in\mathcal N}  y_{u,n,i} \cdot c_{u,n,i},  \forall u \label{Eq:RateReceive}\\
    &\tilde R_n =  W \sum_{i\in\mathcal I} \sum_{u\in\mathcal U}   y_{u,n,i} \cdot c_{u,n,i},  \forall n\label{Eq:RateSend}\\
    & R_u = \bar R_u - \tilde R_{u+B}  , \ R_u>0, \forall u \\
    & x_i = \sum_{u\in\mathcal U} y_{u,n,i}, \forall n,i \label{Eq:available_pattern_relax} \\
    &  \sum_{i\in\mathcal I} x_i = 1, x_i \geq 0, \forall i, ~y_{u,n,i} \geq 0, \forall u,n,i\label{Eq:constraint_pattern}
  \end{align}
\end{subequations}

The objective function in (\ref{Eq:relaxed_problem}) is concave and all the constraints
are linear. However, the amount of variables, i.e. $\{x_i\}$ and $\{y_{k,n,i}\}$,
increases exponentially with the number of servers due to the fact that we have $2^N-1$
different time reuse patterns.
For a large cooperative network, it becomes a formidable task to find the optimal solution
to the convex problem in (\ref{Eq:relaxed_problem}).
Fortunately, we can show that only a very limited number of reuse patterns need to be activated without sacrificing the network throughput performance.

\begin{pro}
\label{Pro:pattern_number}
There exists one optimal solution to the problem in (\ref{Eq:relaxed_problem}) where
at most $U$ out of $2^N - 1$ reuse patterns are active.
Specifically, we can find one time reuse profile $\bm x^*$ which maximizes the
objective function in (\ref{Eq:relaxed_problem}) and satisfies the following criterion:
\begin{equation}
\setlength{\abovedisplayskip}{3pt}
\setlength{\belowdisplayskip}{3pt}
    \label{Eq:pattern_number}
    \lVert \bm{x}^* \rVert_0 \leq U.
  \end{equation}
\end{pro}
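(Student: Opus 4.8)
The plan is to lift the problem from the high-dimensional variables $(\bm x,\bm y)$ into the $U$-dimensional space of effective rate vectors $\bm R=(R_1,\dots,R_U)$, on which the objective depends, and then invoke a Carath\'eodory-type argument. First I would observe that each $R_u$ in (\ref{Eq:Rate}) is a linear function of $\bm y$, so $\sum_{u\in\mathcal U}\log(R_u)$ is a strictly concave, coordinatewise-increasing function of $\bm R$ alone. For a single pattern $i$ receiving the entire time budget, the admissible internal TDMA schedules $\{\tilde y_{u,n}\ge 0:\sum_{u\in\mathcal U}\tilde y_{u,n}=1,\ n\in\mathcal A_i\}$ form a polytope, and the induced per-unit-time effective rate vector is a linear image of that polytope; hence the set $\mathcal C_i\subset\mathbb R^{U}$ of rate vectors achievable by pattern $i$ alone is a convex polytope. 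Because (\ref{Eq:available_pattern_relax}) and (\ref{Eq:constraint_pattern}) make $\bm x$ a collection of convex-combination weights, the whole achievable rate region equals ${\sf conv}\big(\bigcup_{i\in\mathcal I}\mathcal C_i\big)$, and any feasible point obeys $\bm R=\sum_{i\in\mathcal I}x_i\bm\rho^{(i)}$ with $\bm\rho^{(i)}\in\mathcal C_i$, where the number of active patterns is exactly $\lVert\bm x\rVert_0$.

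Next I would exploit optimality to confine all active patterns to a single face of this region. Let $\bm R^\star$ be the (unique) optimal rate vector and set $\lambda_u:=1/R_u^\star>0$, the gradient of the objective at $\bm R^\star$. The first-order optimality condition for maximizing a concave function over the convex set ${\sf conv}(\bigcup_i\mathcal C_i)$ yields $\sum_{u\in\mathcal U}\lambda_u R_u\le\sum_{u\in\mathcal U}\lambda_u R_u^\star=:M$ for every achievable $\bm R$, so the hyperplane $H:=\{\bm R:\sum_{u}\lambda_u R_u=M\}$ supports the region at $\bm R^\star$. Taking any optimal $\bm x^\star$ and writing $\bm R^\star=\sum_i x_i^\star\bm\rho^{(i)}$, each $\bm\rho^{(i)}\in\mathcal C_i$ satisfies $\sum_u\lambda_u\rho_u^{(i)}\le M$; since the weighted average equals $M$ and $\sum_i x_i^\star=1$, every active pattern ($x_i^\star>0$) must attain equality, i.e. $\bm\rho^{(i)}\in H$.

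Finally I would apply Carath\'eodory inside $H$. All active contributions lie in the convex sets $\mathcal C_i\cap H$, which sit in the affine hyperplane $H$ of dimension $U-1$, and $\bm R^\star\in{\sf conv}\big(\bigcup_{i:\,x_i^\star>0}(\mathcal C_i\cap H)\big)$. By the version of Carath\'eodory's theorem for a convex hull of a union of convex sets in $\mathbb R^{U-1}$, the point $\bm R^\star$ can be written as a convex combination of points drawn from at most $(U-1)+1=U$ distinct sets $\mathcal C_i\cap H$; collapsing the points belonging to a common $\mathcal C_i$ into one is permitted precisely because each $\mathcal C_i$ is convex. Reassigning the weights of this reduced combination to the corresponding $x_i$ (and zeroing the rest) produces a feasible $\bm x^\star$ that attains the optimal $\bm R^\star$ with $\lVert\bm x^\star\rVert_0\le U$. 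The main obstacle I anticipate is the gap between the naive Carath\'eodory bound $U+1$ and the claimed $U$: the entire saving hinges on the supporting-hyperplane step above, which pins all active patterns to the $(U-1)$-dimensional face $H$ before Carath\'eodory is applied; checking that the relay-induced subtraction $R_u=\bar R_u-\tilde R_{u+B}$ keeps $\mathcal C_i$ convex, and that the face argument survives the possibly negative coordinates of $\bm\rho^{(i)}$, is the delicate part.
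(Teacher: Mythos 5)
Your proposal is correct and its skeleton is the same as the paper's: pass to the $U$-dimensional space of effective rate vectors, observe that the achievable $\bm R$ is a convex combination of per-pattern rate vectors (your $\bm\rho^{(i)}\in\mathcal C_i$ play the role of the paper's $\bm\Phi^i$, generalized to let each pattern's internal TDMA split vary), and invoke Carath\'eodory. The one place where the two arguments genuinely diverge is the step that improves the naive bound $U+1$ to $U$. The paper appeals to ``Pareto efficiency'' to assert that $\bm R^*$ lies on a proper face of the $p$-simplex produced by Carath\'eodory, which is stated rather loosely (it implicitly needs the observation that a full-dimensional simplex has open relative interior, so a Pareto point must sit on its boundary). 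You replace this with the first-order optimality condition: the gradient $\lambda_u=1/R_u^\star>0$ defines a supporting hyperplane $H$ of the achievable region at $\bm R^\star$, every actively weighted $\bm\rho^{(i)}$ is forced onto $H$ because a convex combination attaining the supremum of a linear functional must have every active term attain it, and Carath\'eodory is then applied inside the $(U-1)$-dimensional affine set $H$. This is the cleaner and more rigorous rendering of the same idea, and it also disposes of the worry you raise at the end: the supporting-hyperplane and merging arguments are indifferent to the signs of the coordinates of the individual $\bm\rho^{(i)}$, and positivity of $\bm R^\star$ itself (needed for differentiability of the objective) is guaranteed because the objective is $-\infty$ otherwise, exactly as the paper notes. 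The convexity of each $\mathcal C_i$ that your merging step requires is immediate since it is a linear image of a product of simplices.
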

\begin{proof}
Suppose we are given one particular optimal solution to the problem in
(\ref{Eq:relaxed_problem}), i.e. $\bm x$ and $\bm y$.
Each element of $\bm y$ can be re-written as $y_{u,n,i} = x_i \cdot\tau_{u,n,i}$.
The rates $\bar R_u$ and $\tilde R_n$ in (\ref{Eq:RateReceive}) and (\ref{Eq:RateSend})
can be expressed as
$\bar R_u = W \sum_{i\in\mathcal I} x_i \bar R_u^i$ and
$\tilde R_n  = W \sum_{i\in\mathcal I} x_i \tilde R_n^i$, where
$\bar R_u^i := W \sum_{n\in\mathcal N}  \tau_{u,n,i} \cdot c_{u,n,i}$
and $\tilde R_n^i := W \sum_{u\in\mathcal U}  \tau_{u,n,i} \cdot c_{u,n,i}$.
Define $\bar{\bm R} := [\bar R_1,\cdots,\bar R_U]^T$,
$\tilde{\bm R} := [\tilde R_1, \cdots, \tilde R_U]^T$,
$\bar{\bm R}^i := [\bar R_1^i,\cdots,\bar R_U^i]^T$,
$\tilde{\bm R}^i := [\tilde R_{B+1}^i,\cdots,\tilde R_N^i]^T$,
$\bm \Phi^i := \bar{\bm R}^{i}- \tilde{\bm R}^{i}$, and
$\bm R:= \bar{\bm R} - \tilde{\bm R} = [ R_1,\cdots, R_U]^T$.
We see $\bm R$ can be written as
\begin{equation}
\setlength{\abovedisplayskip}{3pt}
\setlength{\belowdisplayskip}{3pt}
\bm R = \left[{\bm \Phi}^1,\cdots,{\bm \Phi}^{2^N-1} \right]\bm x.
\end{equation}
By defining a set $\mathcal Q$ as $\mathcal Q := \{\bm \Phi^1,\bm \Phi^2,...,\bm
\Phi^{2^N-1}\}$, we see the vector $\bm R$ lies in ${\sf conv}(\mathcal Q)$, i.e. the
convex hull of $\mathcal Q$. Since the dimension of the vector $\bm R$ is $U$,
from the Carath\'{e}odory's theorem \cite{2006_Nonlinear_Programming},
we know the vector $\bm R$ must lie in the convex hull of $(p+1)$ affinely
independent vector points in $\mathcal Q$ with $p \leq U$.
Let $\mathcal Q'$ denote the set of those $(p+1)$ vector points and we know
$\bm R$ lies in ${\sf conv}(\mathcal Q')$, which is a $p$-simplex.
Furthermore, when the solution is optimal, the vector
$\bm R^*$ should reach the Pareto efficiency
\cite{2015_Zhuang,1997_Linear_Programming} given that the network utilization is
measured by the PF metric in (\ref{Eq:object_1}), which is concave with respect to
$\bm R$. Hence we can find one optimal solution in the face of the $p$-simplex,
which is also one $q$-simplex with $q<p\leq U$.
Note that the constraints $R_u>0,\forall u$ hold for the optimal solution $\bm R^*$ naturally since the objective function is meaningless for any non-positive $R_u$, otherwise the problem is infeasible.
As a result, we see $\bm R^*$ can be represented by a convex
combination of at most $U$ affinely independent points in ${\sf conv}(\mathcal Q)$.
In summary, there exits one optimal time reuse profile $\bm x^*$ satisfying
$\|\bm x^*\|_0 \le U$,
$\bm R^*=[\bm \Phi^1,...,\bm \Phi^{2^N-1}]\bm x^*$.
\end{proof}
\vspace{-5pt}
Proposition \ref{Pro:pattern_number} indicates that we only need to turn on no more than
$U$ reuse patterns to achieve the optimal network throughput performance instead of
activating all the $2^N-1$ feasible patterns.

\vspace{-5pt}
\section{Efficient Pattern Selection and Resource Allocation Algorithm}\label{Algorithms}

As indicated by Proposition \ref{Pro:pattern_number}, the following optimization
problem shares the same optimal objective value as the problem in
(\ref{Eq:relaxed_problem}).
\begin{subequations}
\setlength{\abovedisplayskip}{3pt}
\setlength{\belowdisplayskip}{3pt}
\label{Eq:problem_coordinate}
  \begin{align}
    \underset{\bm{x},\bm{y},\mathcal V}{\rm maximize}\quad
    & P(\bm x, \bm y, \mathcal V) = \sum_{u\in\mathcal U}\log (R_u)   \\
    {\rm subject\ to} \quad
    &  x_i = 0,  i\in \{i| \bm v_i\notin \mathcal V\}, {\rm (\ref{Eq:RateReceive})-(\ref{Eq:constraint_pattern})}\label{Eq:ZeroPat},\\
    & |\mathcal V| \le U, \label{Eq:NumPat}
  \end{align}
\end{subequations}
where $\mathcal V \subseteq \tilde{\mathcal V}$ denotes the candidate set of reuse
patterns containing at most $U$ patterns. The problem in (\ref{Eq:problem_coordinate})
is still very hard to solve.
We split it into two subproblems and solve them iteratively. In particular,
during the $(t+1)$-th iteration, we carry out the following updates.\\
\noindent{$\bullet$} {\it P1. Pattern Selection}:
Determine the pattern set $\mathcal{V}^{t+1}$ such that
$\mathcal{V}^{t+1}\subseteq\tilde{\mathcal{V}}$ and
$\lvert \mathcal{V}^{t+1} \rvert \leq U$;\\
\noindent{$\bullet$} {\it P2. Resource Allocation:}
Update the resource allocation and the corresponding association rules as
\begin{equation}
\setlength{\abovedisplayskip}{3pt}
\setlength{\belowdisplayskip}{3pt}
\label{Eq:Sub_2}
    \underset{\bm{x} \in \mathcal{G}(\mathcal V^{t+1}) }{\rm maximize}\ \  \underset{\bm{y} \in \mathcal F(\bm x)}{\rm maximize}\ \   P(\bm x,\bm y,
    \mathcal{V}^{t+1}), \ {\rm subject \ to} \ {\rm (\ref{Eq:ZeroPat})},
\end{equation}
where $\mathcal{G(V)} := \{ \bm{x} | \sum_{i\in \{j| \bm v_j\in \mathcal V\}} x_i = 1,
x_i \geq 0, \forall i\}$, and
$\mathcal F(\bm x) := \{ \bm{y} | \sum_{u} y_{u,n,i} = x_i, y_{u,n,i} \geq 0,
\forall u,n,i\}$.
The problem in (\ref{Eq:Sub_2}) is convex and can be solved efficiently by  general
convex solvers, e.g. CVX \cite{2015_CVX}.
However, the pattern selection subproblem has $\sum_{m=1}^{U}\binom{2^N-1}{m}$
possible solutions in theory and it is impractical to apply the exhaustive search
method. Next, we propose an approximated solution enjoying low-complexity.

From $\mathcal{V}^t$, we can define a set $\mathcal{V}_d^t$ as
\begin{equation}
\setlength{\abovedisplayskip}{5pt}
\setlength{\belowdisplayskip}{5pt}
\mathcal{V}_d^t :=
\{ \bm{v} \big| \bm v\in\tilde{\mathcal{V}}, \exists \bm{v}' \in \mathcal{V}^t, D_H(\bm{v},\bm{v}') \leq d \},
\end{equation}
where $d \leq N$ and $D_H(\bm{v},\bm{v}')$ denotes the Hamming distance between patterns
$\bm{v}$ and $\bm{v}'$.
Furthermore, we add a particular pattern set to the candidate pattern set $\mathcal{V}^t$ in the $(t+1)$-th iteration when $|\mathcal V^t|< U$, i.e.
$\mathcal{V}^{t+1} = \mathcal V^t \bigcup \mathcal{V}^*$ and
$\mathcal{V}^* \subseteq \mathcal V_d^t $ denotes the pattern set to be included
in the $(t+1)$-th iteration.
Inspired by the pattern selection approach in \cite{2016_Kuang} and
the Frank-Wolfe method \cite{2006_Nonlinear_Programming}, we put forward
an iterative pattern selection method according to the solution of the following problem:
\begin{equation}
\setlength{\abovedisplayskip}{3pt}
\setlength{\belowdisplayskip}{3pt}
\label{Eq:MM_update_D2D}
(\mathcal{V}^*, \bm x^*, \bm y^*) =
\arg \ \max_{\mathcal{V} \subseteq \mathcal{V}_d^t}\  \max_{\bm x \in \mathcal{G}(\mathcal{V})} \ \max_{\bm{y} \in \mathcal{F}(\bm{x})}
\nabla_{\bm{y}} {P}(\bm x^t, \bm{y}^t, \mathcal V^t)^T \bm{y} .
\end{equation}
Basically, in (\ref{Eq:MM_update_D2D}), we identify a set of reuse patterns to
activate by finding the direction providing the most predominant improvement in
the objective value under the specified constraints.
Denoting the $(u, n, i)$-th entry of the gradient vector
$[\nabla_{\bm{y}} {P}(\bm x^t,\bm{y}^t,\mathcal V^t)]_{u,n,i}$
by $p_{u,n,i}^t$, we can have
\begin{equation}
\setlength{\abovedisplayskip}{5pt}
\setlength{\belowdisplayskip}{5pt}
p_{u,n,i}^t = W \cdot c_{u,n,i} \cdot \left(  R_{u}^{-1} - \mathds{1}\{ n > B \} \cdot R_{n - B}^{-1} \right).
\end{equation}
Now we can establish the following result.
\begin{pro}
\label{Pro_choosepattern}
Among the solution to the problem in (\ref{Eq:MM_update_D2D}), the set $\mathcal{V}^*$
contains the following time reuse pattern:
\begin{equation}
\setlength{\abovedisplayskip}{3pt}
\setlength{\belowdisplayskip}{3pt}
\label{Eq:pro_2}
 \bm v_{i^*} = \arg \max_{\bm v_i\in \mathcal V_d^t}
 \sum_{n\in\mathcal N} p_{u_{n,i}^*,n,i}^t.
\end{equation}
where $u_{n,i}^* := \arg \max_u \ p_{u,n,i}^t$.
\end{pro}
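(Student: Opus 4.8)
The plan is to solve the nested maximization in (\ref{Eq:MM_update_D2D}) from the inside out, exploiting that the objective $\nabla_{\bm y} P(\bm x^t, \bm y^t, \mathcal V^t)^T \bm y = \sum_{u,n,i} p_{u,n,i}^t\, y_{u,n,i}$ is linear in $\bm y$, becomes linear in $\bm x$ once $\bm y$ is eliminated, and hence at each stage is maximized at an extreme point of a simplex. Carrying out the three stages in the order $\bm y$, then $\bm x$, then $\mathcal V$ collapses the whole program onto a single pattern, which I claim is exactly $\bm v_{i^*}$ in (\ref{Eq:pro_2}).

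First I would fix $\mathcal V$ and $\bm x$ and solve the innermost problem $\max_{\bm y \in \mathcal F(\bm x)} \sum_{u,n,i} p_{u,n,i}^t\, y_{u,n,i}$. Since $\mathcal F(\bm x) = \{\bm y \mid \sum_u y_{u,n,i} = x_i,\ y_{u,n,i} \geq 0\}$ imposes a separate budget constraint for each server--pattern pair $(n,i)$, the linear objective decouples across these pairs. For each $(n,i)$ the feasible set is a simplex of total mass $x_i$, so the maximum is attained by concentrating all of $x_i$ on the user with the largest coefficient, namely $u_{n,i}^* = \arg\max_u p_{u,n,i}^t$. Summing the per-block contributions yields the reduced objective $\sum_i x_i \sum_n p_{u_{n,i}^*,n,i}^t$, which I would abbreviate as $\sum_i x_i w_i$ with $w_i := \sum_n p_{u_{n,i}^*,n,i}^t$.

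Next I would maximize $\sum_i x_i w_i$ over $\bm x \in \mathcal G(\mathcal V)$. Because $\bm x$ ranges over the simplex $\{\sum_{i:\bm v_i \in \mathcal V} x_i = 1,\ x_i \geq 0\}$ and the reduced objective is linear, the optimum is again attained at a vertex: place all the weight on the single pattern in $\mathcal V$ with the largest $w_i$, giving value $\max_{i:\bm v_i \in \mathcal V} w_i$. Finally, maximizing over the choice of $\mathcal V \subseteq \mathcal V_d^t$ amounts to selecting a candidate set whose best pattern is as large as possible, so the optimal value equals $\max_{\bm v_i \in \mathcal V_d^t} w_i$ and is attained by any $\mathcal V^*$ that contains the maximizing pattern $\bm v_{i^*} = \arg\max_{\bm v_i \in \mathcal V_d^t} \sum_n p_{u_{n,i}^*,n,i}^t$. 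This establishes that $\mathcal V^*$ contains $\bm v_{i^*}$, which is precisely (\ref{Eq:pro_2}).

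The argument is a clean vertex-of-a-simplex fact applied three times, so the main obstacle is not an estimate but bookkeeping: I must verify that the $(n,i)$ budget constraints genuinely decouple even though $\bm x$ and $\bm y$ are coupled through $x_i = \sum_u y_{u,n,i}$, and I must respect the nesting order so that each stage sees a linear objective before its simplex maximization is taken. A secondary point is tie-breaking: if several patterns attain $\max_{\bm v_i \in \mathcal V_d^t} w_i$, the claim should be read as asserting that some optimal $\mathcal V^*$ contains $\bm v_{i^*}$, and I would note that concentrating $\bm x$ entirely on $\bm v_{i^*}$ always achieves the optimal value, so including $\bm v_{i^*}$ is without loss of optimality.
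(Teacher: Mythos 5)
Your proposal is correct and follows essentially the same route as the paper's proof: the paper normalizes via $\tau_{u,n,i}=y_{u,n,i}/x_i$ and solves the resulting nested linear programs over simplices from the inside out, which is exactly your decoupled per-$(n,i)$ vertex argument followed by the vertex argument over $\bm x$. Your explicit remarks on the decoupling of the budget constraints and on tie-breaking are sound but do not change the substance.
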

\begin{proof}
Defining $\tau_{u,n,i}:=y_{u,n,i}/x_i$, the problem in (\ref{Eq:MM_update_D2D}) can be
re-written as
\begin{equation}
\setlength{\abovedisplayskip}{3pt}
\setlength{\belowdisplayskip}{0pt}
\label{Eq:inner_product}
 \max_{\mathcal{V} \subseteq \mathcal{V}_d^t}\
\underset{ \bm x \in \mathcal{G}(\mathcal{V}) }{\rm max} \sum_{i\in\mathcal I} x_i \sum_{n\in\cal{N}}
\underset{ \underset{\tau_{u,n,i}\geq 0}{\sum_u\tau_{u,n,i} = 1} }{\rm max}
\sum_{u\in\cal{U}} \tau_{u,n,i} \cdot p_{u,n,i}^t.
\end{equation}
The third maximization is a constrained linear programming (LP) problem and the
optimal $\tau_{u,n,i}^*$ is given by
\begin{equation}
\setlength{\abovedisplayskip}{3pt}
\setlength{\belowdisplayskip}{3pt}
\label{Eq:choose_user}
\tau_{u,n,i}^* = \mathds{1}\{u = u^*_{n,i}\}, {\rm where}\
u^*_{n,i} := \arg \max_u p_{u,n,i}^t.
\end{equation}
The problem in (\ref{Eq:inner_product}) thus becomes
\vspace{-0.2cm}
\begin{equation}
\label{Eq:changed_inner}
\max_{\mathcal{V} \subseteq \mathcal{V}_d^t}\
\underset{ \bm x \in \mathcal{G}(\mathcal{V}) }{\rm max}
\sum_{i\in\mathcal I} x_i \sum_{n\in\mathcal{N}} p_{u_{n,i}^*,n,i}^t.
\end{equation}
The problem in (\ref{Eq:changed_inner}) is again a constrained LP and
the solution is:
${x_i^*} = \mathds{1}\{\bm v_i = \bm v_{i^*} \},
\ \bm v_{i^*} = \arg \max_{\bm v_i\in \mathcal V_d^t} \sum_{n\in\mathcal N} p_{u_{n,i}^*,n,i}^t.$
\end{proof}

The solution to (\ref{Eq:MM_update_D2D}) during the $(t+1)$-th iteration simply
tells us to activate the particular time reuse pattern $\bm v_{i^*}$ to maximize
the improvement in the objective value. Clearly, the associated complexity in
finding the candidate pattern set $\mathcal{V}^{t+1}$ is determined by the size of
$\mathcal{V}_d^t$. We have the lowest complexity in the pattern selection subproblem
by setting $d = 1$. To enable faster convergence, we select $N$ time reuse patterns
in each iteration. Specifically, we modify the ON/OFF state of the $n$-th server
in each reuse pattern $\bm v_i\in\mathcal{V}^t$ as $\bm v_{i_{n}} :=
\bm v_{i}\oplus \bm e_n$ and define the set $\bar{\mathcal V}_n^t$ as
\begin{equation}
\setlength{\abovedisplayskip}{3pt}
\setlength{\belowdisplayskip}{3pt}
\label{Eq:pat_set}
\bar{\mathcal V}_n^t = \{\bm  v_{i_{n}}|\bm v_{i_{n}} := \bm v_{i} \oplus \bm e_n ,\bm v_i\in \mathcal V^{t}, \bm v_i\neq \bm e_n \}.
\end{equation}
Then we determine a new candidate reuse pattern $\bm v_{i^*_n}$ to activate according to
Proposition \ref{Pro_choosepattern} as
\begin{equation}
\setlength{\abovedisplayskip}{1pt}
\setlength{\belowdisplayskip}{1pt}
\label{Eq:prop2_distr}
\bm v_{i^*_n} = \arg \underset{ {\bm v_{i_n}\in \bar{\mathcal V}_n^t},\bm  v_{i_{n}}\notin \mathcal V^t }{\rm max} \sum_{\bar n\in\mathcal N} p_{u_{\bar n,i}^*,\bar n,i}^t.
\end{equation}
Now a temporary reuse pattern set for the $(t+1)$-th itervation, i.e.
$\hat{\mathcal{V}}^{t+1}$ can be derived as
\begin{equation}
\label{Eq:RenewSet}
\setlength{\abovedisplayskip}{3pt}
\setlength{\belowdisplayskip}{3pt}
\hat{\mathcal{V}}^{t+1} = \mathcal V^t \cup \{\bm v_{i^*_n}, n\in\mathcal{N}\}.
\end{equation}
Since the number of time reuse patterns in $\hat{\mathcal V}^{t+1}$ will become larger
than $U$ when we add $N$ time reuse patterns in each iteration, we perform the
following pattern trimming as well. In particular,
let $\bm x^{t+1}$ be the optimal solution to the problem in (\ref{Eq:Sub_2}) with
${\mathcal{V}}=\hat{\mathcal{V}}^{t+1}$. By defining a positive threshold
$\epsilon_1<1$, we can delete those patterns in $\hat{\mathcal{V}}^{t+1}$ with
negligible allocated resources, i.e.
\begin{equation}
\label{Eq:DeleteUseless}
\setlength{\abovedisplayskip}{3pt}
\setlength{\belowdisplayskip}{3pt}
\mathcal V^{t+1} = \{\bm v_i| x^{t+1}_i>\epsilon_1, \bm v_i\in\hat{\mathcal V}^{t+1}\}.
\end{equation}

Note that the solution obtained by solving (\ref{Eq:problem_coordinate}) allows
multi-server association. To meet the single-server association requirement in
(\ref{Eq:original_problem}), we simply associate the user to the server that gives
the largest data rate under the particular time reuse pattern $\bm v_i$, i.e.
\begin{equation}
\setlength{\abovedisplayskip}{3pt}
\setlength{\belowdisplayskip}{3pt}
\label{Eq:multi_2_single}
z_{u,n,i} = \mathds{1}\{n = n_{u,i}^*\}, \ n_{u,i}^* =
\arg \max_{n\in\mathcal N} y_{u,n,i} \cdot c_{u,n,i}.
\end{equation}

After finalizing the user association rule as in (\ref{Eq:multi_2_single}),
the original problem in (\ref{Eq:original_problem}) can be solved with
low complexity since $\bm z$ is given and the number of active patterns in
$\mathcal{V}^{t}$ is limited. Algorithm \ref{Alg:solve_MM} summarizes all the steps.

\begin{algorithm}[t]
\caption{Optimal time reuse with D2D relays.}
\begin{algorithmic}[1]
\State \textbf{Initialization:}
   $\epsilon_1=\epsilon_2=10^{-4}$, $t=1$, $\mathcal V^1 = \{ \bm{e}_1 + \bm{e}_{u+1}
|u \in \mathcal{U}\}$, and $P^0 = -1$. Solve (\ref{Eq:Sub_2}) with $\mathcal{V}^{1}$, obtain $\bm x^{1}$, $\bm y^{1}$, and $ P^{1}$;
\State \textbf{While} $|{P}^{t} - {P}^{t-1}| > \epsilon_2$ and $|\mathcal{V}^{t}| \leq U$
\State \quad Choose the active time reuse patterns $\{ \bm v_{i^*_n}, n\in\mathcal N\}$
as (\ref{Eq:pat_set})-(\ref{Eq:prop2_distr});
\State \quad Obtain $\hat{\mathcal{V}}^{t+1}$ with $\{ \bm v_{i^*_n}\}_{n\in\mathcal N}$ as (\ref{Eq:RenewSet});
\State \quad Solve (\ref{Eq:Sub_2}) with $\hat{\mathcal{V}}^{t+1}$, obtain $\bm x^{t+1}$, $\bm y^{t+1}$, and $ P^{t+1}$;
\State \quad Renew $\mathcal V^{t+1}$ with $\bm x^{t+1}$ as (\ref{Eq:DeleteUseless});
\State \quad $t = t + 1$;
\State \textbf{End}
\State Determine the single-server associations as (\ref{Eq:multi_2_single});
\State Solve (\ref{Eq:original_problem}) with $\mathcal{V}^t$ and the
dervied associations to get the optimal resource allocations and corresponding
relaying strategies.
\end{algorithmic}
\label{Alg:solve_MM}
\end{algorithm}

\begin{table}[t]
\centering
\caption{Brute-force CVX vs. proposed low-complexity solution.}
\scalebox{0.97}{
\begin{tabular}{|c|c|c|c|c|c|c|c|c|c|}
\hline
{\# DUEs}                & 10 & 11 & 12 &30  \\
\hline
($\theta_B$, $\theta_L$) & (8.63, 8.63) & (8.61, 8.51) & (7.12, 7.11) & (?, 7.34)  \\
\hline
($t_1$, $t_2$)           &  (735, 20) & (3004, 40) & (10608, 45) & (?, 160)
 \\
\hline
\end{tabular}}
\label{Tab:error}
\vspace{-0.5cm}
\end{table}

\vspace{-0.3cm}
\section{Numerical Results} \label{Simulation}
\vspace{-0.2cm}

In this section, we test our algorithm by simulating a cooperative network with only
one pico-BS ($30$dBm transmission power (TxPwr)) since we focus on the relay behavior
of the DUEs ($20$dBm TxPwr). See also Fig. \ref{Fig:D2D_topology} for one example.
Other parameters are set as follows.\\
\noindent{$\bullet$} DUEs are uniformly dropped in a square specified by $[0,200]$m$
\times$ $ [0,200]$m and the pico-BS is deployed at the center;\\
\noindent{$\bullet$} System bandwidth is $20$MHz. Noise PSD is $-174$dBm/Hz. The
path-loss is determined as $37.6 \log_{10} d_m + 35.3 + 5n_w $(dB)
\cite{2016_Ma_HenNet_D2D}, where $d_m$ is the distance in meters, and $n_w$ stands for
the number of walls \cite{2009_Winner}.

\begin{figure}[t]
\centering
\epsfig{file = 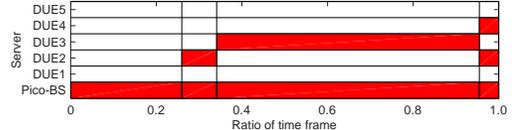, width = 0.39\textwidth}\\
\vspace{-0pt}
\caption{The active time reuse patterns obtained from the proposed algorithm with respect to the network in Fig.~\ref{Fig:D2D_topology}.  Solid color indicates that the particular time slot is occupied by the corresponding server.}
\label{Fig:Pat_UEA}
\vspace{-5pt}
\end{figure}

\begin{figure}[t]
\centering
\epsfig{file = 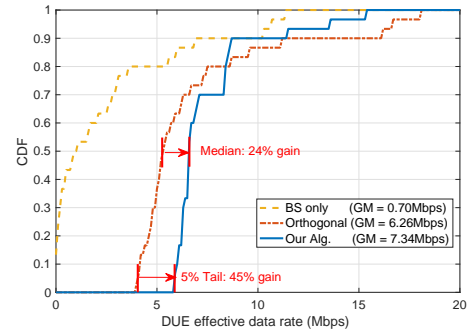, width = 0.39\textwidth}\\
\vspace{-0pt}
\caption{The CDF of $30$ DUE effective data rates. Our Alg.: solution with the proposed algorithm. Orthogonal: solution with the orthogonal scheme \cite{2016_Mastronarde_D2D_relay}. BS only: solution without D2D relays.}
\label{Fig:CDF}
\vspace{-0.5cm}
\end{figure}

Table~\ref{Tab:error} compares the geometric mean (GM) of the DUEs' throughputs and the
CPU running time between the brute-force optimal solution
(solve the problem in (\ref{Eq:relaxed_problem}) directly with CVX) and our proposed
low-complexity solution. In Table~\ref{Tab:error},
$\theta_B$ ($\theta_L$) denote the GM throughput in Mbps with the brute-force solution
(our proposed algorithm) and $t_1$ ($t_2$) is the consumed CPU time of the brute-force
solution (our proposed algorithm) in seconds. Question mark ``?'' indicates the
brute-force CVX solver can not be solved with our lab computer.
Our proposed algorithm performs satisfactorily and only suffers a very small amount
of degradation as we have $12$ DUEs. However, note that the brute-force method becomes
impractical even for a middle-sized network, e.g. a network with $30$ DUEs.

For a network with $5$ DUEs, Fig.~\ref{Fig:Pat_UEA} shows the set of activated time
reuse patterns and the fraction of total time allocated to each pattern.
The total number of active time reuse patterns is $4$, which follows
Proposition~\ref{Pro:pattern_number}.
It also indicates that the optimal time reuse patterns are not as those proposed in \cite{2013_Zhou_D2D_relay,2013_Xu_Cooperation,2016_Mastronarde_D2D_relay,2015_Li_MMM}.
The orthogonal scheme in Fig.~\ref{Fig:CDF} considers the time reuse patterns where
only the BS and one D2D relay are active \cite{2016_Mastronarde_D2D_relay},
i.e. $\mathcal V = \{ \bm{e}_1 + \bm{e}_{u+1}|u \in \mathcal{U}\}$.
In Fig. \ref{Fig:CDF}, it is also worth noting that the
orthogonal scheme could provide higher effective rates to some DUEs than our scheme.
In fact, this indicates that our proposed algorithm will ask those DUEs with high data
rates to serve as D2D relays to help the DUEs with poor channel conditions.
It is clear that our proposed algorithm achieves higher GM data rate than the other two
existing schemes.

\vspace{-0.3cm}
\section{Conclusions} \label{Conclusion}
\vspace{-0.2cm}

In this paper, we have studied the optimal time reuse patterns and the corresponding relaying
strategies in a cooperative network with D2D relays. The original optimization problem
is of a formidable size. This is due to the fact that the total number of feasible time reuse
patterns scales exponentially with the number of nodes in the network. To circumvent this dilemma,
we have shown that we just need to turn on a limited number of time reuse patterns without
sacrificing the network performance firstly. In particular, we have proved that the number of active
reuse patterns can be no more than the number of users. Secondly, we have put forward a
low-complexity algorithm to identify the small set of active time reuse patterns and solve the
large scale optimization problem approximately. Compared to those existing schemes with orthogonal
resource allocation constraints, our proposed scheme offers significant gains.

\newpage

\end{document}